\begin{document}
\title{Coulomb fields as  perturbations  of type D vacuum space-times}
%\subtitle{Do you have a subtitle?\\ If so, write it here}
\author{Koray D\"{u}zta\c{s} %\inst{1} %\and Second author\inst{2}% etc
% \thanks is optional - remove next line if not needed
%\thanks{\emph{Present address:} Insert the address here if needed}%
}                     % Do not remove
\offprints{koray.duztas@emu.edu.tr}          % Insert a name or remove this line
\institute{Physics Department, Eastern Mediterranean  University, Famagusta, Northern Cyprus, Mersin 10, Turkey %\and the second here}
}
\date{Received: date / Revised version: date}
% The correct dates will be entered by Springer
%
\abstract{
We consider a test, non-null electromagnetic field special in the sense that the principal null directions of the field lie along  the two repeated principal null directions of the type D vacuum background. We prove that the special non-null field is a valid solution of the Maxwell equations on a type D background. This field behaves as $1/r^2$ and the only non-vanishing Maxwell scalar is $\phi_1$ so it can be called the Coulomb field. We show that the contribution to the energy and angular momentum fluxes of the Coulomb field vanishes at any surface of integration, on the type D vacuum backgrounds  Kerr, Kerr-Taub-NUT, Taub-NUT, and Schwarzschild. Therefore this field does not lead to any perturbations of the mass and angular momentum parameters of these space-times.
\PACS{
      {04.20.Gz }{Spacetime topology, causal structure, spinor structure}   \and
      {04.70.Bw	}{Classical black holes}
     } % end of PACS codes
} %end of abstract
\maketitle
\section{Introduction}
A space-time can be identified conveniently by its response to external perturbations. In that context, the study of black hole perturbations has gained considerable interest both in relativistic astrophysics and mathematical physics. The first achievement in this area regarding the electromagnetic field was to reduce Maxwell's equations in Kerr geometry to a single second order partial diferential equation \cite{ipser}. The most important contribution is definitely due to Teukolsky and Press who decoupled  and separated wave equations and derived asymptotic solutions for electromagnetic gravitational and neutrino perturbations of Kerr black holes, also analysed the interaction of black holes with test fields~\cite{teuk1,teuk3}. Teukolsky  decoupled the wave equations for the perturbations in a general context which applies to all type D vacuum space-times. This is achieved by employing the Newman-Penrose two spinor formalism~\cite{newpen}. Then he proceeds to apply separation of variables on Kerr background. As Kerr geometry admits two Killing vectors $K_1^a=(\partial/\partial t)^a$ and $K_2^a=(\partial/\partial \phi)^a$ the separation of the $t$ and  $\phi$ coordinates is expected. It turns out that the $r$ and $\theta$ dependence of the perturbations can also be separated. A new version of Teukolsky's master equation was derived in~\cite{teukbini}. The decoupling and separation of variables of the perturbation equations was studied on various type D vacuum backgrounds including the Taub-NUT~\cite{bini0},  the Kerr-Taub-NUT~\cite{bini}, and the C metric~\cite{kofron}.
In particular, there exists detailed analysis of the electromagnetic perturbations of space-times~\cite{cohen1,cohen2,chrza1,chrza2,chandra1,chandra2,fayos1}.

In this work we consider a test, non-null electromagnetic field in type D vacuum space-times which is special in the sense that the principal null directions of the field lie along the two principal null directions of the type D vacuum background. For such special fields the decoupling and separation of variables of Maxwell's equations is trivial. The problem is to prove whether this field can exist on a type D vacuum background; i.e. it is a valid solution of Maxwell equations on this background.

The outline of the papers is as follows: In the introduction we briefly review the necessary formalisms. In section (\ref{sec}) we prove the existence of the special electromagnetic field and evaluate its behaviour on type D vacuum backgrounds. In section (\ref{sec3}) we calculate the energy and angular momentum fluxes due to the special field  to calculate the perturbations of the mass and angular momentum parameters of the background geometry.

\subsection{Newman-Penrose formalism} In most of the relevant works Newman Penrose (NP) two spinor formalism is used~\cite{newpen}. This formalism is based on a spin basis 
 $(o, \iota)$  endowed with a symplectic structure $\epsilon_{AB}=-\epsilon_{BA}$.
\begin{equation}
\begin{array}{l}
\epsilon_{AB} o^A o^B=\epsilon_{AB}\iota^A \iota^B=0 \\
\epsilon_{AB}o^A \iota^B=1
\end{array}
\end{equation}
The spin basis induces a tetrad of null vectors.
\begin{equation}
l^a=o^A \bar{o}^{A'} \quad n^a=\iota^A \bar{\iota}^{A'} \quad m^a=o^A \bar{\iota}^{A'} \quad \bar{m}^a=\iota^A \bar{o}^{A'} \label{nptetrad}
\end{equation}
The NP null tetrad satisfies
\begin{eqnarray}
& &l_an^a=n_al^a=-m_a \bar{m}^a=-\bar{m}_a m^a=1 \nonumber \\
& &l_am^a=l_a \bar{m}^a=n_am^a=n_a\bar{m}^a=0
\label{ortog}
\end{eqnarray}
The projection of derivatives onto the null tetrad are defined by conventional symbols.
\begin{eqnarray}
&=&D=l^a\nabla_a,\quad \Delta=n^a\nabla_a,\quad \delta=m^a\nabla_a,\quad \bar{\delta}=\bar{m}^a\nabla_a \nonumber \\
&=& n_aD + l_a \Delta - \bar{m}_a \delta -m_a \bar{\delta} \label{nablanp}
\end{eqnarray}
NP spin coefficients are defined by
\begin{equation}
\begin{array}{ll}
Do_A=\epsilon o_A - \kappa \iota_A & D\iota_A=\pi o_A-\epsilon \iota_A \\
\Delta o_A=\gamma o_A -\tau \iota_A & \Delta \iota_A=\nu o_A -\gamma\iota_A \\
\delta o_A=\beta o_A-\sigma \iota_A & \delta \iota_A =\mu o_A -\beta \iota_A \\
\bar{\delta}o_A =\alpha o_A-\rho \iota_A & \bar{\delta} \iota_A=\lambda o_A-\alpha \iota_A
\end{array} \label{npscalars}
\end{equation}
One can also define the spin coefficients in the form $o^A D o_A=\kappa, o^A D \iota_A=\epsilon $ , which is manifest in (\ref{npscalars}).

Since we are only dealing with scalars for any NP quantitiy $\phi$, $\nabla_{[a}\nabla_{b]} \phi=0$. This leads to commutation relation for NP derivative operators.
\begin{eqnarray}
(\delta D-D\delta)\psi &=&[(\bar{\alpha}+\beta - \bar{\pi})D+\kappa \Delta - (\bar{\rho}+\epsilon-\bar{\epsilon} )\delta -\sigma \bar{\delta}]\psi \nonumber \\
(\Delta D- D\Delta)\psi &=& [(\gamma +\bar{\gamma} )D+ (\epsilon +\bar{\epsilon})\Delta - (\bar{\tau} +\pi)\delta -(\tau + \bar{\pi})\bar{\delta}]\psi \nonumber \\
(\delta \Delta -\Delta\delta)\psi &=& [-\bar{\nu} D+(\tau -\bar{\alpha} -\beta)\Delta +(\mu -\gamma + \bar{\gamma})\delta + \bar{\lambda}\bar{\delta}]\psi \nonumber \\
(\bar{\delta}\delta -\delta \bar{\delta})\psi &=& [(\bar{\mu}-\mu )D + (\bar{\rho} -\rho)\Delta +(\alpha - \bar{\beta})\delta - (\bar{\alpha} - \beta)\bar{\delta}]\psi  \label{npcommuters}
\end{eqnarray} 
The commutation relations (\ref{npcommuters}) are identically satisfied for the derivative operators acting on any scalar $\psi$. 
\subsection{Electromagnetic and gravitational perturbations in NP formalism}
Let us first introduce the totally antisymmetric Maxwell tensor $F_{ab}$ with its spinor equivalent $F_{ABA'B'}$. We define
\begin{equation}
\phi_{AB} \equiv \frac{1}{2}F_{ABC'} {}^{C'} =\phi_{BA}
\end{equation}
This leads to
\begin{equation}
F_{ABA'B'}=\phi_{AB} \epsilon_{A'B'}+\epsilon_{AB} \bar{\phi}_{A'B'}
\end{equation}
The form of source free Maxwell equations in NP formalism is very simple.
\begin{equation}
\nabla^{AA'}\phi_{AB}=0 \label{maxwellnp}
\end{equation}
There are four complex equations here by $A',B=0,1$, corresponding to eight real Maxwell equations. The symmetric valence 2 spinor $\phi_{AB}$ generates 3 complex scalars via
\begin{equation}
\phi_0 =\phi_{AB}o^A o^B \quad \phi_1 =\phi_{AB}o^A \iota^B \quad \phi_2 =\phi_{AB}\iota^A \iota^B \label{npscalarsmax}
\end{equation}
The NP description of electromagnetism is given in terms of the scalars (\ref{npscalarsmax}). Also note that
\begin{equation}
\phi_{AB}=\phi_2 o_A o_B -2\phi_1 o_{(A }\iota_{B)}
+\phi_0 \iota_A \iota_B \label{phiab}
\end{equation}
The explicit form of Maxwell's equations in terms of NP scalars can be derived.
\begin{equation}
\begin{array}{l}
(D-2\rho)\phi_1 -(\bar{\delta} +\pi-2\alpha)\phi_0 +\kappa \phi_2=0 \\
(D-\rho +2\epsilon)\phi_2 -(\bar{\delta} +2\pi)\phi_1 +\lambda \phi_0 =0 \\
(\delta -2\tau)\phi_1 -(\Delta +\mu -2\gamma)\phi_0 -\sigma \phi_2 =0 \\
(\delta -\tau +2\beta)\phi_2 -(\Delta +2\mu)\phi_1 -\nu \phi_0 =0 
\end{array}
\label{maxwell}
\end{equation}

Any totally symmetric spinor can be decomposed  in terms of univalent spinors (see e.g. \cite{agr,penrosebook}). Hence we may decompose the spinor equivalent of Maxwell tensor.
\begin{equation}
\phi_{AB}=\alpha_{(A}\beta_{B)}
\end{equation}
$\alpha$ and $\beta$ are called the principal spinors of $\phi_{AB}$. If $\alpha$ and $\beta$ are proportional then $\alpha$ is called a repeated  principal  spinor of $\phi$, and $\phi$ is said to be algebraically special or null, or of type $N$. The corresponding real null vector $\alpha_a=\alpha_A \bar{\alpha}_{A'}$ is called a repeated principal null direction.
If $\alpha$ and $\beta$ are not proportional then $\phi$ is said to be algebraically general or type I, or non-null.

To formulate gravity let us define the spinor equivalent of the Weyl tensor $C_{abcd}$. 
\begin{equation}
C_{abcd}+ i C^*_{abcd}=2\Psi_{ABCD}\epsilon_{A'B'}\epsilon_{C'D'}
\end{equation} 
$\Psi_{ABCD}$ is totally symmetric and satisfies the spinor analogue of Bianchi identities in vacuum 
\begin{equation}
\nabla^{DD'}\Psi_{ABCD}=0 \label{bianchi}
\end{equation}
The explicit form of Bianchi identities are derived in \cite{newpen}. (Also see \cite{agr,penrosebook}) Since $\Psi_{ABCD}$ is totally symmetric there exists univalent spinors  $\alpha_A , \beta_B, \gamma_C, \delta_D$ such that 
\begin{equation}
\Psi_{ABCD}=\psi \alpha_{(A}  \beta_B \gamma_{C} \delta_{D)} \label{theoem1}
\end{equation}
$\alpha_A , \beta_B, \gamma_C, \delta_D$ are called the principal spinors of $\Psi_{ABCD}$.  The corresponding real null vectors determine the principal null directions of $\Psi_{ABCD}$. The classification of space-times according to the principal null directions of the Weyl tensor is known as Petrov classification. As in the case of electromagnetism if none of the principal null directions coincide, the space-time is algebraically general of type I, and if all four principal null directions coincide the space-time is of type N. 
If there are two pairs of repeated principal null directions the space-time is of type D. 
\subsection{The conditions for the existence of special fields}
Algebraically special  (null) electromagnetic fields are known to exist in space-times that admit a shear free, geodesic null congruence~\cite{robinson}. 
Existence of the fields refers to the fact that the integrability conditions are satisfied by Maxwell equations  (also see e.g. \cite{penrosebook,agr}). Let us consider a null electromagnetic field with principal direction $l^a$ such  that the only non-vanishing Maxwell scalar is $\phi_2=\phi$. Maxwell equations (\ref{maxwell}) reduce to
\begin{equation}
D\phi=\rho \phi, \quad \delta \phi =(\tau - 2\beta) \phi \label{nullmax1}
\end{equation}
A solution for (\ref{nullmax1}) does not necessarily exist, since the commutation relations (\ref{npcommuters}) should also be satisfied. Using NP field equations one derives that a solution for (\ref{nullmax1}) exists if the spin coefficients satisfy $\kappa=\sigma=0$, i.e. if the background spacetime admits a geodesic, shear free null congruence with tangent vector $l^a$.  (see \cite{penrosebook,agr})  

\section{The special non-null electromagnetic field}\label{sec}
It is known that electrovacuum solutions for  type D space-times exist such that the two repeated principal null congruences of the Weyl
tensor are aligned with the two principal null congruences of the non-null electromagnetic
field~\cite{plebanski}. (Also see \cite{podolsky} and references therein). In this work, we consider an algebraically general (non-null) test electromagnetic field in type D vacuum space-times, which is special in the sense that the principal null directions of the electromagnetic field lie along  the repeated principal null directions of the space-time. First, we have to prove that such a special field is a valid solution of Maxwell's equations on type D vacuum background.
\begin{theorem}\label{theo}
Type D  vacuum space-times admit a special, test, non-null electromagnetic field,  such that the two principal null directions of the electromagnetic field  lie along the repeated principal null directions of the space-time.
\end{theorem}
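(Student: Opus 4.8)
The plan is to work in the Newman--Penrose frame adapted to the background: take $o^A$ and $\iota^A$ to point along the two repeated principal spinors of $\Psi_{ABCD}$, so that $l^a$ and $n^a$ coincide with the two repeated principal null directions of the type D Weyl tensor. In this frame the only non-vanishing Weyl scalar is $\Psi_2$, and, since the space-time is vacuum and algebraically special, the Goldberg--Sachs theorem forces both repeated congruences to be geodesic and shear free; equivalently the spin coefficients obey $\kappa=\sigma=0$ for $l^a$ and $\lambda=\nu=0$ for $n^a$. This vanishing is the structural input that makes the construction go through, exactly as $\kappa=\sigma=0$ did for the null field in (\ref{nullmax1}).

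First I would encode the alignment condition on the field. Requiring the two principal spinors of $\phi_{AB}$ to lie along $o^A$ and $\iota^A$ means $\phi_{AB}\propto o_{(A}\iota_{B)}$, so by (\ref{phiab}) and (\ref{npscalarsmax}) the field has $\phi_0=\phi_2=0$ with a single surviving scalar $\phi_1\equiv\phi$. Substituting $\phi_0=\phi_2=0$ into Maxwell's equations (\ref{maxwell}) collapses the four complex equations to the transport system
\begin{equation}
D\phi=2\rho\phi,\quad \Delta\phi=-2\mu\phi,\quad \delta\phi=2\tau\phi,\quad \bar{\delta}\phi=-2\pi\phi .
\end{equation}
This is the non-null analogue of (\ref{nullmax1}); the content of the theorem is that this overdetermined first-order system admits a non-trivial solution on the given background, i.e.\ that the field ``exists''.

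Rather than grind through the integrability conditions coming from the commutators (\ref{npcommuters}) directly, I would exhibit an explicit solution. The Bianchi identities (\ref{bianchi}) for a type D vacuum, written in the aligned frame, reduce (after using $\kappa=\sigma=\lambda=\nu=0$ and $\Psi_0=\Psi_1=\Psi_3=\Psi_4=0$) to
\begin{equation}
D\Psi_2=3\rho\Psi_2,\quad \Delta\Psi_2=-3\mu\Psi_2,\quad \delta\Psi_2=3\tau\Psi_2,\quad \bar{\delta}\Psi_2=-3\pi\Psi_2 .
\end{equation}
Comparing with the transport system above, the logarithmic derivatives of $\Psi_2$ are precisely $3/2$ times those demanded of $\phi$, so $\phi=\Psi_2^{2/3}$ (up to an overall constant) threads all four equations at once. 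Since $\Psi_2\neq 0$ on a genuine type D background, this $\phi$ is non-trivial, the principal spinors $o^A,\iota^A$ of $\phi_{AB}$ remain distinct, and the field is indeed non-null. This both proves existence and, because $\Psi_2\sim r^{-3}$ on the backgrounds of interest, already delivers the $\phi\sim r^{-2}$ fall-off claimed for the Coulomb field.

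The main obstacle I anticipate is the reduction of the Bianchi identities to the four $\Psi_2$-transport relations; had one instead taken the commutator route, the analogous obstacle is the bookkeeping needed to show that (\ref{npcommuters}) applied to $\phi$ is identically satisfied once the NP field equations and the type D vacuum conditions are imposed. Both computations hinge on the same fact --- that in the aligned frame all four of $\kappa,\sigma,\lambda,\nu$ vanish and only $\Psi_2$ survives --- so the real content of the theorem is the Goldberg--Sachs input, after which spotting the $\Psi_2^{2/3}$ ansatz makes existence immediate.
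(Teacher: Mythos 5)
Your proposal is correct and follows the paper's strategy up to the final step: you align the spin basis with the repeated principal spinors, reduce Maxwell's equations (\ref{maxwell}) with $\phi_0=\phi_2=0$ to the transport system (\ref{maxspecial}), and reduce the vacuum Bianchi identities to the parallel system (\ref{bianchispe}) for $\Psi_2$. Where you diverge is in how existence is concluded. The paper sets $\theta=\ln\phi_1$ (resp.\ $\ln\Psi_2$), notes that both systems take the form $D\theta=C\rho$, $\Delta\theta=-C\mu$, $\delta\theta=C\tau$, $\bar{\delta}\theta=-C\pi$ with $C=2$ (resp.\ $C=3$), and observes that $C$ cancels out of every commutator in (\ref{npcommuters}), so the integrability conditions (\ref{condi1})--(\ref{condi3}) are literally the same for both systems; since the type D background exists they hold, hence a solution $\phi_1$ exists. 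You instead exhibit the explicit solution $\phi_1\propto\Psi_2^{2/3}$, whose logarithmic derivatives are automatically $2/3$ of those of $\Psi_2$. This is more constructive: it bypasses the Frobenius-type step of passing from ``integrability conditions hold'' to ``a solution exists'', and it delivers the $1/r^2$ fall-off immediately (consistent with the paper's separate derivation $\phi_1=\rho^2 C_1$ in (\ref{solution}), since $\Psi_2\propto\rho^3$ on these backgrounds). Two minor caveats: $\Psi_2^{2/3}$ requires a branch choice of the complex power, harmless locally because $\Psi_2\neq 0$ on a genuine type D region, but worth flagging if a global statement is intended; and the Goldberg--Sachs input $\kappa=\sigma=\lambda=\nu=0$ is not actually needed to reduce Maxwell's equations to (\ref{maxspecial}) --- that reduction follows from $\phi_0=\phi_2=0$ alone --- though it is needed, exactly as you say, to reduce the Bianchi identities to (\ref{bianchispe}).
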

\begin{proof}
Naturally we choose a spinor basis $(o,\iota)$ for a type D space-time such that the two principal null directions correspond to $l^A=o^Ao^{A'}$ and $n^A=\iota^A \iota^{A'}$. Consider an algebraically general test electromagnetic field
\begin{equation}
\phi_{AB}=\phi o_{(A}\iota_{B)} \label{phispecial}
\end{equation}
which is special in the sense that its principal null directions are parallel to those of the background space-time. (A test field is one that has a negligible effect on the background geometry.) From (\ref{npscalarsmax}) and (\ref{phiab}) we see that  $\phi_0=\phi_2=0$.  The only non-vanishing Maxwell scalar is $\phi_1$, which equals $-\phi/2$ according to our definition (\ref{phispecial}). Then, Maxwell equations have the form:
\begin{equation}
\begin{array}{l}
D \phi_1 =2\rho \phi_1 \\
\Delta \phi_1 =-2\mu \phi_1 \\
\delta \phi_1 =2\tau \phi_1 \\
\bar{\delta} \phi_1 =-2\pi \phi_1 \\
\end{array}
\label{maxspecial}
\end{equation} 
We have to prove that a solution for the system (\ref{maxspecial}), i.e. a solution  of Maxwell's equation such that the only non-vanishing Maxwell scalar is $\phi_1$, exists in a type D background. If a space-time is of type D the only non-vanishing scalar of the Weyl tensor is $\Psi_2$. In this case  the  Bianchi identities in vacuum reduce to
\begin{equation}
\begin{array}{l}
D \Psi_2 =3\rho \Psi_2 \\
\Delta \Psi_2 =-3\mu \Psi_2 \\
\delta \Psi_2 =3\tau \Psi_2 \\
\bar{\delta} \Psi_2 =-3\pi \Psi_2\\
\end{array}
\label{bianchispe}
\end{equation} 
The integrability conditions for the systems (\ref{maxspecial}) and (\ref{bianchispe}) are identical. In other words the integrability conditions for the existence of a special test electromagnetic field in the form (\ref{phispecial}), in a type D background, are identical with the conditions for the existence of the type D background itself. Thus, there exists a special non-null test electromagnetic field in every type D background.
\end{proof}
The integrability conditions for the existence of type  D vacuum space-times were derived by Kinnersley~\cite{kinner}. Let us re-derive these conditions in a way that clarifies their equivalence with the integrability conditions for the system (\ref{maxspecial}), as suggested by theorem (\ref{theo}).   Let $\theta=\ln \phi_1$ for (\ref{maxspecial}) and $\theta=\ln \Psi_2$  for (\ref{bianchispe}). Then the systems reduce to $(\kappa=\sigma=\lambda=\nu=\epsilon=0)$
\begin{equation}
D \theta =C\rho, \quad
\Delta \theta=-C\mu, \quad 
\delta \theta =C\tau, \quad
\bar{\delta} \theta =-C\pi 
\end{equation}
where $C=2$ for (\ref{maxspecial}) and $C=3$ for (\ref{bianchispe}). First consider the commutation relation for $D$ and $\Delta$.
\begin{eqnarray}
(\Delta D- D\Delta)\theta &=& [(\gamma +\bar{\gamma} )D  - (\bar{\tau} +\pi)\delta -(\tau + \bar{\pi})\bar{\delta}]\theta  \nonumber \\
&=& C [\rho(\gamma  + \bar{\gamma}) - \tau (\bar{\tau} + \pi) + \pi (\tau + \bar{\pi})] \nonumber \\
&=&  C [\rho(\gamma  + \bar{\gamma}) - \tau \bar{\tau} + \pi \bar{\pi}]
\label{condi01}
\end{eqnarray}
We also have
\begin{equation}
(\Delta D- D\Delta)\theta = \Delta(C\rho) - D(-C\mu)=C(\Delta\rho + D\mu)
\label{condi02}
\end{equation}
The first integrability condition is derived by equating (\ref{condi01}) and (\ref{condi02})
\begin{equation}
\Delta\rho + D\mu=\rho(\gamma  + \bar{\gamma}) - \tau \bar{\tau} + \pi \bar{\pi}
\label{condi1}
\end{equation}
The condition (\ref{condi1}) is not identically satisfied by NP field equations. Similarly the commutators $(\bar{\delta} D- D \bar{\delta})$ and $(\bar{\delta}\Delta - \Delta \bar{\delta})$ lead to non-trivial conditions that are not identically satisfied by NP field equations. (The other commutators lead to conditions that are identically satisfied by NP field equations as in the case of null electromagnetic fields.)  
\begin{eqnarray}
& & \bar{\delta}\rho + D \pi = \rho (\alpha + \bar{\beta}) \label{condi2} \\
&& \delta \mu + \Delta \tau = \tau (\gamma - \bar{\gamma}) - \mu (\bar{\alpha} + \beta ) \label{condi3}
\end{eqnarray}
The conditions (\ref{condi1}), (\ref{condi2}), and (\ref{condi3}) are imposed on the background space-time so that a solution for the system (\ref{maxspecial}) exists (the case $C=2$). Also they are the conditions that a type D space-time itself exists (the case $C=3$).
\subsection{Behaviour  in Kerr space-time}
We have proved the existence of the special non-null electromagnetic field in type D vacuum space-times. For that field the only non-vanishing Maxwell scalar is $\phi_1$. We are particularly interested in its asymptotic behaviour in Kerr space-time. From NP field equations for type D space-times, we have $D\rho=\rho^2$. This leads to the solution 
\begin{equation}
\phi_1=\rho^2 C_1 \label{solution}
\end{equation}
where $C_1$ is independent of $r$. In Kerr space-time $\rho=-(r- i a \cos \theta)^{-1}$ so the special field behaves as $1/r^2$ everywhere. In fact, the general expression for $\rho$ in a type D space-time is  $\rho=-(r+ i \rho^0)^{-1}$, where $\rho^0$ is independent of $r$~\cite{kinner}, so $1/r^2$ behaviour applies to every type D vacuum space-time. Since the only non-vanishing Maxwell scalar for the special non-null electromagnetic field is $\phi_1$, and the field behaves as $1/r^2$, it can be called the Coulomb field around the black holes.
\section{Perturbations of the space-time parameters due to Coulomb fields}\label{sec3}
The effect of a test field on the background geometry is negligible, by definition. Though the interaction of a test field with the background space-time is not supposed to change the  structure of the metric tensor, it is expected to lead to some perturbations on the  parameters of the background metric. In this section we evaluate the perturbations of the mass ($M$), and angular momentum ($a$) parameters  of Kerr and Kerr-Taub-NUT metrics, due to the Coulomb field. The existence of the Coulomb field on these backgrounds is guaranteed  by theorem (\ref{theo}), since they are type D vacuum solutions of Einstein's equations.

Kerr solutions are stationary and axisymmetric so that they admit two Killing vectors $\partial /\partial t$ and $\partial / \partial \phi$. This property allows the definition of globally conserved energy and angular momentum for the test fields or particles in this space-time. In this respect, the rates of change in the corresponding background parameters can be expressed as fluxes through a space-like hyper-surface. The current conservation equation $\nabla_a (T^{ac}K_c)=0$ (where $K$ is a Killing vector) leads to
\begin{equation}
\left(\frac{dM}{dt}\right) = - \int_{S} \sqrt{-g} \, T^{1}_{\;\;0} d\theta d\phi
\label{eq:dm/dt}
\end{equation}
and
\begin{equation}
\left(\frac{dL}{dt}\right)
=   \int_{S} \sqrt{-g} \, T^1_{\;\;3} d\theta d\phi \label{eq:dl/dt}
\end{equation}
where $S$ is the surface of integration. In equations (\ref{eq:dm/dt}) and (\ref{eq:dl/dt}), the metric signature $(+,-,-,-)$ is adopted. The energy momentum tensor for electromagnetic fields in terms of Maxwell's scalars, is given by
\begin{eqnarray}
4\pi T_{\mu\nu}&=&\{\phi_0 \phi_0^* n_\mu n_\nu + 2\phi_1 \phi_1^* [l_{(\mu}n_{\nu)}+ m_{(\mu}m_{\nu)}^*] +\phi_2 \phi_2^*  l_{\mu}l_{\nu}\nonumber \\
& &-4\phi_1 \phi_0^* n_{(\mu}m_{\nu)}-4\phi_2 \phi_1^* [l_{(\mu}m_{\nu)}+2\phi_2 \phi_0^* m_{\mu}m_{\nu} \} +\mbox{c.c.}
\label{tmunu}
\end{eqnarray} 
where c.c. denotes complex conjugate. For the Coulomb  field, the only non-vanishing Maxwell scalar is $\phi_1$.The energy momentum tensor reduces to
\begin{equation}
4\pi T_{\mu\nu}=  2\phi_1 \phi_1^* [l_{(\mu}n_{\nu)}+ m_{(\mu}m_{\nu)}^*] +\mbox{c.c.}\label{tmunured}
\end{equation}
Now consider the  NP tetrad  for the  Kerr-Taub-NUT  metric. 
\begin{eqnarray}
& &l^\mu=[(\Sigma + aA)/\Delta ,1,0,a/\Delta],\quad n^\mu=[(\Sigma + aA), -\Delta ,0,a]/(2\Sigma) \nonumber \\
& &m^\mu=[A \csc \theta ,0,-i, \csc \theta]/[\sqrt{2}(\ell -ir+a \cos \theta)]
\label{tetrad0}
\end{eqnarray}
where $\Sigma=r^2+(\ell + a^2\cos\theta)^2$, $\Delta=r^2 - 2Mr-\ell^2 +a^2$, and $A=a \sin^2 \theta - 2 \ell \cos \theta$. Tetrad (\ref{tetrad0}) represents the Kerr metric when $\ell=0$, the Taub-NUT metric when $a=0$, and the Schwarzschild metric when $\ell=a=0$. Using  this tetrad, one can derive
\begin{equation}
[l_{(1}n_{0)}+ m_{(1}m_{0)}^*]=[l_{(1}n_{3)}+ m_{(1}m_{3)}^*]=0 \label{tmunuzero} 
\end{equation}
(\ref{tmunured}),(\ref{tetrad0}) and (\ref{tmunuzero}) imply that $T^1_{\;\; 0}$  and $T^1_{\;\; 3}$ identically vanish.  Substituting these in (\ref{eq:dm/dt}) and (\ref{eq:dl/dt}), one derives 
\begin{equation}
\left(\frac{dM}{dt}\right)_{\rm b.h} =\left(\frac{dL}{dt}\right)_{\rm b.h} =0
\end{equation}
The contribution of the Coulomb field  to the radial energy and angular momentum fluxes  vanishes  at every surface of integration. Thus, the perturbations for the mass and angular momentum parameters due to the Coulomb field vanishes, for the type D vacuum backgrounds Kerr, Kerr-Taub-NUT, Taub-NUT, and Schwarzschild. The space-times identically retain their initial states since the geometry is uniquely determined by the parameters $(M,a,\ell)$. 
\section{Summary and conclusions}
In this work, we first proved that  type D  vacuum space-times admit a special test electromagnetic field such that the principal null directions of the electromagnetic field coincide with those of the space-time. This corresponds to a solution of Maxwell equations such that the only non-vanishing Maxwell scalar is $\phi_1$. It turns out that the integrability conditions for the existence of the relevant electromagnetic field in a type D  vacuum space-time are identical with the conditions of the existence of the  space-time itself. Next, we stated that the solution for this electromagnetic field has the form $\phi_1=\rho^2 C_1$, so it behaves as $1/r^2$ everywhere, and in every type D vacuum space-time. This special field can be called the Coulomb field.  We showed that  the contribution of the Coulomb field to energy and angular momentum fluxes vanishes at every surface of integration, for the Kerr, Kerr-Taub-NUT, Taub-NUT, and Schwarzschild backgrounds. Thus, the Coulomb field does not lead to any perturbations of the mass and angular momentum parameters of these space-times.

%\section{Section title}
%\label{sec:1}
%and \cite{RefJ}
%%\subsection{Subsection title}
%\label{sec:2}
%as required. Don't forget to give each section
%and subsection a unique label (see Sect.~\ref{sec:1}). You can also insert figures and tables.

%
% For  figures use
%\begin{figure*}
% Use the relevant command for your figure-insertion program
% to insert the figure file. See example above.
% If not, use
%\vspace*{5cm}       % Give the correct figure height in cm
%\includegraphics{leer.eps}
%\caption{Please write your figure caption here}
%\label{fig:2}       % Give a unique label
%\end{figure*}
% or  this
%\begin{figure}
%\centering
% Use the relevant command for your figure-insertion program
% to insert the figure file.
% For example, with the option graphics use
%\resizebox{0.75\textwidth}{!}{%
%  \includegraphics{leer.eps}
%}
% If not, use
%\vspace{5cm}       % Give the correct figure height in cm
%\caption{Please write your figure caption here}
%\label{fig:1}       % Give a unique label
%\end{figure}
%
%
% For tables use
%\begin{table}
%\centering
%\caption{Please write your table caption here}
%\label{tab:1}       % Give a unique label
% For LaTeX tables use
%\begin{tabular}{lll}
%\hline\noalign{\smallskip}
%first & second & third  \\
%\noalign{\smallskip}\hline\noalign{\smallskip}
%number & number & number \\
%number & number & number \\
%\noalign{\smallskip}\hline
%\end{tabular}
% Or use
%\vspace*{5cm}  % with the correct table height
%\end{table}

%
% BibTeX users please use
% \bibliographystyle{}
% \bibliography{}
%
% Non-BibTeX users please use

\end{document}